\newcommand{\npc}{\textsf{NP}-complete\xspace}
\newcommand{\nph}{{\textsf{NP}\textrm{-hard}\xspace} }
\begin{document}
\title{Identifying Codes Kernelization Limitations}

\author{Aritra Banik\inst{1} \and
Praneet Kumar Patra \inst{2} \and
Adele Anna Rescigno\inst{3}\and Abhishek Sahu\inst{1}}
\authorrunning{F. Author et al.}

\institute{National Institute of Science, Education and Research, An OCC of Homi Bhabha National Institute\and
IISER Pune, India
\and
University of Salerno, Italy\\
}
\maketitle             

\begin{abstract}
The Identifying Code (IC) problem seeks a vertex subset whose intersection with every vertex’s closed neighborhood is unique, enabling fault detection in multiprocessor systems and practical uses in identity verification, environmental monitoring, and dynamic localization.  A closely related problem is the Locating-Dominating Set (LD), which requires each non-dominating vertex to be uniquely identified by its intersection with the set. Cappelle, Gomes, and Santos (2021) proved LD is W-hard for minimum clique cover and lacks polynomial kernels for parameters like vertex cover, but their methods did not apply to IC. This paper answers their question, showing IC does not admit a polynomial kernel parameterized by solution size plus vertex cover unless NP $\subseteq$ coNP/poly.

\end{abstract}

\keywords{Identifying Code, Fixed parameter Tractability, Kernelization}

\section{Introduction}
Precisely identifying the location or state of every node within a network, using minimal resources, is a fundamental graph-theoretic problem. This challenge underpins diverse applications, from tracking mobile entities and ensuring robust system diagnostics to optimizing sensor placement in dynamic environments.
Consider, for instance, a large swarm of autonomous robots navigating a communication graph. Due to severe power constraints, these robots can only broadcast a localized signal to their immediate neighbors within the network infrastructure. A central server, tasked with continuous, precise localization of each robot, faces a critical question: \emph{What is the minimum number of stationary sensors required, and where should they be strategically placed in the network, to uniquely detect the location of every single robot?} This intrinsic and highly practical question directly leads to the \emph{Identifying Code} problem (IC). Formally, given a graph (or network) $G=(V,E)$, an identifying code is a subset of vertices $I \subseteq V$ such that for every vertex $v \in V$, its closed neighborhood $N[v]$ intersects $I$ in a unique, non-empty way. The objective of the Identifying Code problem is to find such a set $I$ with minimum cardinality.

Introduced by Karpovsky, Chakrabarty, and Levitin~\cite{karpovsky1998new} initially for fault diagnosis in multiprocessor systems, the Identifying Code problem swiftly found extensive applications across diverse domains. These include identity verification in distributed systems, efficient environmental monitoring, dynamic localization in wireless sensor networks~\cite{ray2003robust,ungrangsi2004implementation,ray2004robust},  biological applications \cite{HKS06}, and even the study of drug and terror social network analysis \cite{BS21}. For a comprehensive overview of its wide-ranging applicability, we refer the reader to the paper by Laifenfeld and Trachtenberg~\cite{laifenfeld2008identifying}.

Unfortunately, the Identifying Code problem has been proven to be \npc~\cite{charon2002identifying,charon2003minimizing}. A simple reduction from 3-SAT is presented in the well-known textbook by Cormen et al.~\cite{cormen2022introduction}. This inherent computational hardness naturally steers research towards approximation algorithms. Indeed, significant progress has been made, demonstrating that identifying codes can be approximated within an $\mathcal{O}(\log(|V(G)|))$ factor of the optimal solution~\cite{laifenfeld2006identifying,gravier2008hardness,suomela2007approximability}. However, this is essentially tight, as it has also been shown that the problem cannot be approximated in polynomial time within a factor of $1+\alpha \cdot \log(|V(G)|)$ for some constant $\alpha > 0$~\cite{laifenfeld2006identifying}. This inapproximability result suggests looking for solutions beyond approximation algorithms.

If a problem can be solved in time $f(k)\cdot \mathrm{poly}(n)$, where $f$ is a computable function depending only on the parameter $k$ and $n$ is the input size, then the problem is said to be \emph{fixed-parameter tractable (FPT)}. 
A closely related notion is that of a \emph{kernel}: a problem admits a kernel if there exists a polynomial-time algorithm that reduces any instance $(I,k)$ to an equivalent instance $(I',k')$ whose size is bounded by a function of $k$ alone.

Another closely related problem, the \emph{Locating-Dominating Set (LD)}, asks for a dominating set $D$ such that every vertex \emph{not} in $D$ is uniquely identified by its non-empty intersection with $D$. The parameterized complexity of the Locating-Dominating Set problem has recently garnered significant attention. Notably, Cappelle, Gomes, and Santos~\cite{CAPPELLE202168} demonstrated that LD is $\text{W[1]}$-hard when parameterized by the size of a minimum clique cover and, furthermore, that it does not admit polynomial kernels when parameterized by vertex cover or distance to clique. Despite these substantial results for LD, the authors explicitly highlighted a critical open challenge concerning Identifying Code, stating:
\begin{quote}
``As far as we could check, none of our approaches is applicable to Identifying Code, which may yield interesting complexity differences between these sibling problems.''
\end{quote}
In this paper, we directly address and resolve these fundamental open questions regarding the parameterized complexity of Identifying Code.

\section{No polynomial kernel for IC}

\def \ss {{Set Splitting }}
\def \F {{\cal F}}
\def \hF {{\hat \F}}
\def \bF {{\bar \F}}
\def \U {{U}}
\def \hd {{\hat{d}}}
\def \bd {{\bar d}}
\def \a {{\lambda}}
\def \b {{\omega}}

Given a collection $\F$ of subsets of a finite set $\U$, the {\em \ss} problem asks to find a partition
of $U$ into two disjoint subsets $\U_0$ and $\U_1$ which splits all the subsets in $\F$, i.e. no subset in $\F$ is entirely contained in either $\U_0$ or $\U_1$.
It is known that the \ss problem is an NP-hard problem \cite{garey1979computers}.

Bodlaender et. al. proved the well-known result below.
\begin{theorem}\label{th:BJK}
\cite{BodlaenderJK14}
If an \nph problem $P$ OR-cross-composes into the parameterized problem $Q$,
then $Q$ does not admit a  polynomial kernelization unless
$NP \subseteq coNP/poly$.
\end{theorem}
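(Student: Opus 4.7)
The plan is to prove the contrapositive: assuming $Q$ admits a polynomial kernel, I will combine it with the given OR-cross-composition to obtain an \emph{OR-compression} of the \nph problem $P$, and then invoke the Fortnow--Santhanam theorem, which rules out such compressions for \nph problems unless $NP \subseteq coNP/poly$.

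Concretely, suppose $Q$ admits a kernel of size $k^c$ for some constant $c$. Given $t$ instances $x_1,\ldots,x_t$ of $P$, each of size at most $n$, the OR-cross-composition produces in polynomial time an instance $(y,k)$ of $Q$ with $k \leq p(n + \log t)$ for some polynomial $p$, such that $(y,k) \in Q$ iff some $x_i \in P$. Feeding $(y,k)$ into the kernelization yields an equivalent instance of $Q$ of size at most $p(n + \log t)^c$. Choosing $t = 2^{n}$ (or any super-polynomial function of $n$ with $\log t$ polynomial in $n$), this size becomes polynomial in $n$ alone while $t$ is super-polynomial in $n$. Thus the pipeline compresses $t$ instances of $P$ of size at most $n$ into a single instance of the NP language $Q$ of size polynomial in $n$, preserving the OR of the answers -- this is exactly an OR-compression of the \nph problem $P$ into $Q$.

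The main obstacle, and the technical heart of the proof, is the Fortnow--Santhanam theorem: it asserts that if any \nph language admits an OR-compression into some NP language, then $NP \subseteq coNP/poly$. Its proof uses an advice-efficient AM-style / covering argument that turns the hypothetical compressor into short non-deterministic circuits certifying non-membership in $P$, exploiting the fact that for each input length the set of NO-instances is large and that the compressor, operating on tuples dominated by NO-instances, must ``spend'' much of its output on identifying any YES-witness, allowing polynomial-sized advice to pinpoint such certificates. I would not reprove this result but cite it as a black box; combined with the OR-compression constructed above, it immediately yields $NP \subseteq coNP/poly$ whenever $Q$ admits a polynomial kernelization, completing the contrapositive and thus the theorem.
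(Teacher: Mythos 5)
The paper does not actually prove this statement: it is quoted verbatim from Bodlaender, Jansen and Kratsch \cite{BodlaenderJK14} and used as a black box, so your sketch can only be compared with the known proof from the literature. At a high level you follow exactly that route — a polynomial kernel for $Q$ composed with the OR-cross-composition yields an OR-compression/distillation of the NP-hard problem $P$, which is then excluded by the Fortnow--Santhanam theorem unless $NP \subseteq coNP/poly$ — so the overall plan is sound.

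Two concrete steps are nevertheless off. First, an OR-cross-composition in the sense of \cite{BodlaenderJK14} comes equipped with a polynomial equivalence relation and is only guaranteed to behave correctly on tuples of pairwise equivalent instances; for an arbitrary tuple of $P$-instances of length at most $n$ you must first partition them into the (polynomially many) equivalence classes, compose and kernelize each class separately, and then handle the OR of polynomially many small instances of the unparameterized target, either by adapting the covering argument or by packaging the outputs as a single instance of a modified target set. Your sketch silently assumes the composition applies to an arbitrary tuple, which is not what the hypothesis gives you. Second, the choice ``$t=2^{n}$, or any super-polynomial $t$'' points in the wrong direction: in the Fortnow--Santhanam argument the co-nondeterministic verifier must guess a $t$-tuple containing the given instance and run the composition-plus-kernel pipeline on it in polynomial time, so the $t$ actually used there is polynomially bounded (roughly the output-size bound plus one; what matters is having more instances than output \emph{bits}, not super-polynomially many instances). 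The $\log t$ term in the parameter bound is made harmless differently: duplicate inputs can be discarded, so without loss of generality $t\le 2^{n+1}$ and the bound $\mathrm{poly}(n+\log t)$ is $\mathrm{poly}(n)$ for every $t$. A final minor point: the Fortnow--Santhanam-type statement should be invoked for an arbitrary target language rather than ``some NP language,'' since cross-composition does not require the unparameterized version of $Q$ to lie in NP.
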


In the following, we prove that the \ss problem OR-cross-compose to the \textsc{Identifying Code} problem. By using Theorem \ref{th:BJK} we then have our main result.
\begin{theorem}\label{th:no-k}
The \textsc{Identifying Code} problem does not admit a polynomial kernel when parameterized by the size of the solution unless $NP \subseteq coNP/poly$.
\end{theorem}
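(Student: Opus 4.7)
My plan is to prove Theorem \ref{th:no-k} via an OR-cross-composition from the \nph Set Splitting problem to \textsc{Identifying Code} parameterized by the solution size; the kernel lower bound then follows directly from Theorem \ref{th:BJK}. Two Set Splitting instances are declared equivalent when they share the same universe size and number of sets, which is clearly a polynomial equivalence relation. After relabeling, I may therefore assume the $t$ input instances are $(\U,\F_1),\ldots,(\U,\F_t)$ on a common universe $\U=\{u_1,\ldots,u_n\}$ with each $|\F_j|=m$, and by padding with trivial no-instances I take $t=2^{\ell}$ with $\ell=\lceil\log t\rceil$, so that each index $j\in[t]$ has a canonical binary string $j_1\cdots j_\ell$.

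The composed graph $G$ is built from three families of gadgets. First, for each $u_i$ an \emph{element gadget} consisting of a choice pair $\{a_i,b_i\}$ wired to a small local structure that any identifying code must hit in exactly one specific way, encoding a bit representing $u_i\in U_1$ vs.\ $u_i\in U_0$. Second, a \emph{selector gadget} of $\ell$ analogous choice pairs $\{s_p,\bar s_p\}$ whose bits encode a chosen instance index $j^\star\in[t]$. Third, a \emph{set-check gadget}: for every instance $j$ and every $F\in\F_j$, a vertex $w_{j,F}$ (together with one or two private helper vertices used to avoid accidental twins) whose neighborhood consists of the element vertices matching $F$ together with the selector vertices chosen in the \emph{complement} of $j$'s binary encoding, i.e.\ $\{\bar s_p:j_p=1\}\cup\{s_p:j_p=0\}$. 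The point of this complement wiring is that for $j=j^\star$ none of the selector neighbors of $w_{j^\star,F}$ lie in $I$, so its closed-neighborhood signature depends only on which element vertices of $F$ are chosen; while for $j\neq j^\star$ at least one selector neighbor already lies in $I$ and provides an automatic distinguishing contribution.

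I set the budget to $k=n+\ell+O(1)$ forced auxiliary vertices, which is polynomial in $n+m+\log t$, as required for OR-cross-composition. For the forward direction, given a splitting partition $(U_0,U_1)$ of some $\F_{j^\star}$, I build $I$ by the prescribed bit choices for elements and selector, plus the forced auxiliary vertices, and verify that every vertex is dominated and pairwise distinguished — the set-check vertices are separated precisely because $F$ being split yields a distinctive element-side signature. For the backward direction, a counting argument on the choice gadgets shows that any code of size $\le k$ must take exactly one vertex from each element pair and exactly one from each selector pair, thus defining both a partition of $\U$ and an index $j^\star$; then if some $F\in\F_{j^\star}$ were monochromatic under this partition, the signature of $w_{j^\star,F}$ would coincide with that of a designated companion witness vertex, contradicting the IC property.

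The main technical obstacle is the joint design of the element and set-check gadgets: the closed-neighborhood signatures must be simultaneously (a) pairwise distinct across all element, selector, set-check, and auxiliary vertices whenever the correct instance and a valid partition are chosen; (b) collide with a designated witness precisely when some set in the chosen instance is monochromatic; and (c) remain twin-free throughout the graph, since true twins make identifying codes impossible. Balancing these three requirements while keeping $k$ bounded by $\mathrm{poly}(n,m)+O(\log t)$ is where the real work lies; once the gadget is correct, the remainder of the argument reduces to a routine domination/separation check.
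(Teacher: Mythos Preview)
Your high-level framework coincides with the paper's: OR-cross-composition from \textsc{Set Splitting}, a $\log t$-bit instance selector, and one bit per universe element, followed by Theorem~\ref{th:BJK}. That part is fine.

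The gap is exactly where you locate ``the real work'': you never specify the mechanism that converts ``$F$ is monochromatic under the chosen partition'' into a signature collision, and the construction as sketched cannot do it within your stated budget $k=n+\ell+O(1)$. With a single set-check vertex $w_{j,F}$ adjacent to \emph{both} element vertices of every $u_i\in F$, its signature already determines $(j,F)$ uniquely, so there is nothing for it to collide with. The natural fix is to add companion vertices $w^{+}_{j,F},w^{-}_{j,F}$ adjacent only to the $a$-side (resp.\ $b$-side) of $F$; then monochromaticity makes $w_{j,F}$ coincide with one of them. But now, in the forward direction, $w^{+}_{j,F}$ and $w^{+}_{j,F'}$ share the same signature whenever $F\cap U_1=F'\cap U_1$, which can happen for distinct split sets in the same instance. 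The paper resolves precisely this by (i) duplicating the universe into $U^{+},U^{-}$ and tripling each set vertex into $v_B^{+},v_B^{-},v_B^{*}$, and (ii) adding a \emph{sets distinguishing gadget} $\Omega$ of size $r=\lfloor\log m\rfloor+1$ that tags each set by its index; the final budget is $6n+r+11s+14$, and the $\log m$ term is essential, not an artifact. Your budget has no $m$-dependence, so either your forward direction fails to separate the companion vertices, or your backward direction has no witness to collide with.

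A second, smaller divergence: the paper does not wire a separate selector gadget into the set-check vertices as you propose. Instead it augments each set $A\in\F_i$ with selector ``elements'' $d_j^{b}$ (creating $\hat\F_i$) and runs the same three-copy machinery on them via the gadgets $\Delta_j$; this is what lets the monochromaticity argument ignore whether the collision occurs on the original elements or on the selector bits. Your complement-wiring idea is elegant for isolating $j^{\star}$, but it does not by itself supply the collision partner you need in the backward direction. In short, the plan is on the right track but is not yet a proof: you still owe the actual gadget, and the evidence from the paper's construction is that it will cost you an extra $\Theta(\log m)$ in the parameter.
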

Given a finite set $\U=\{x_1, \ldots, x_n\}$, we show how to encode $t$ instances  $(\F_0,\U), \ldots, (\F_{t-1},\U)$  of \ss problem into a single instance $\langle G,k \rangle$ of \textsc{Identifying Code} problem.
By copying instances, we may assume without loss of generality that $t=2^{s}$
, for some $s \in \mathbb{N}$.
Hence, we represent any index $0 \leq i \leq t-1$  through a binary string $i=0b_{s-1}\ldots b_1b_0$ (i.e., we add a leading bit  $b_{s} =0$ and $b_j \in \{0,1\}$ for $0 \leq j \leq s-1$).
Before constructing the graph $G$, for each input instance $(\F_i, U)$, 
 we consider
a new family $\hF_i$ obtained by including in it the following two sets, for each set $A
\in \F_i$:
    \begin{equation}\label{Aadded}
    A \cup \{d^{b_j}_j : 0 \leq j \leq s\} \ \ \text{ and }  \ 
    \ A \cup \{d^{1-b_j}_j : 0 \leq j \leq s\}.
    \end{equation}
We get $|\hF_i|=2|\F_i|$. The graph $G$ is constructed as  

\begin{figure}
        \centering
        \includegraphics[width=1\linewidth,keepaspectratio]{
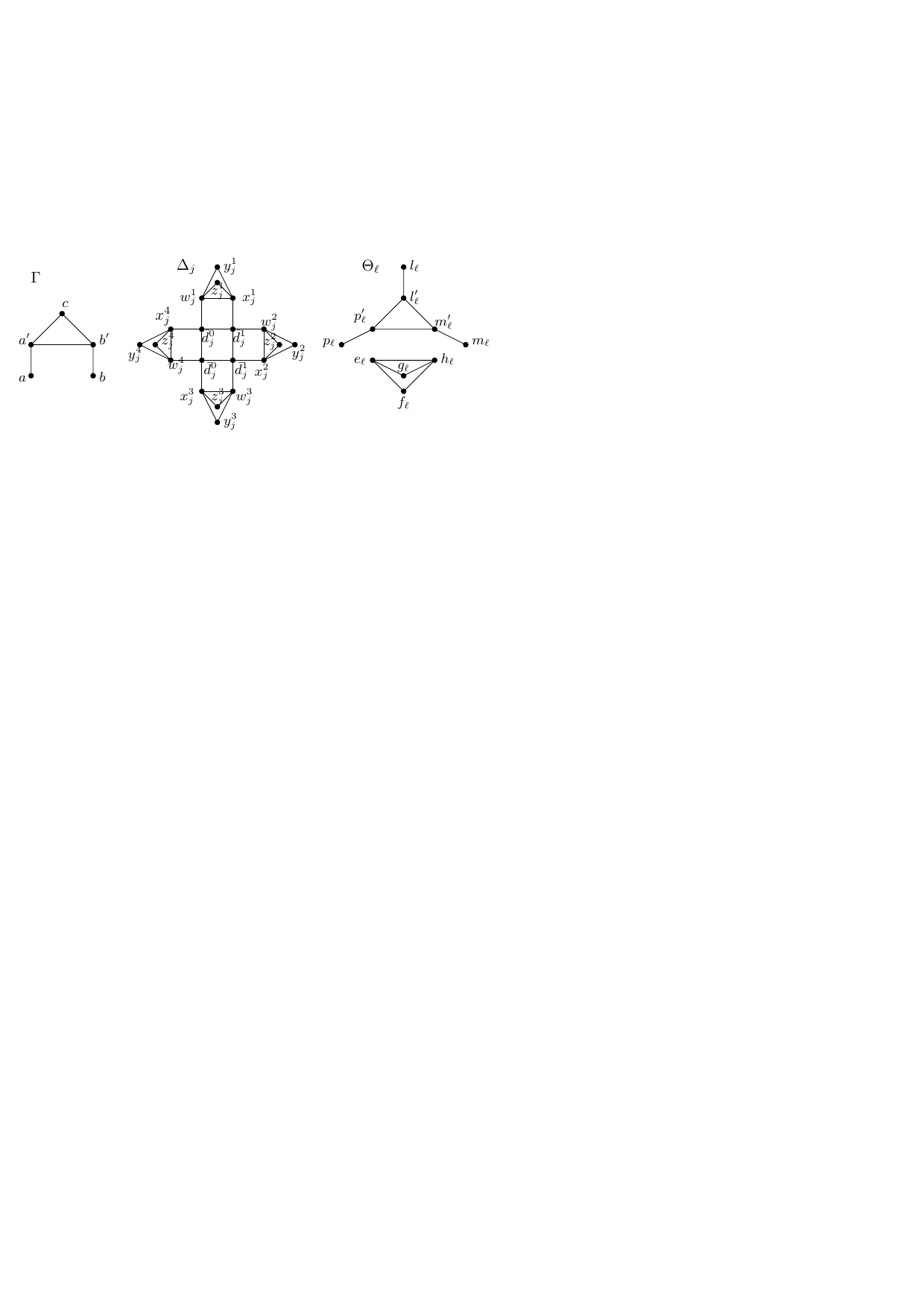}
        \caption{Gadgets used in the construction of graph $G$}
        \label{fig:gadgets}
    \end{figure}
    
follows:\\
$\bullet$ \  Create two independent sets $\U^+$ and $\U^-$ of $n$ vertices, each representing the $n$ elements in $\U$; i.e.,
$\U^+=\{x_\ell^+ \ | \ x_\ell \in \U\}$ and $\U^-=\{x_\ell^- \ | \ x_\ell \in \U\}$. \\
$\bullet$ \  Create three independent sets $\bF_*$, $\bF_-$ and $\bF_+$ of $\sum_{i=1}^t |\hF_i|$ vertices, each representing the elements in $\hF_i$ for each $i \in \{0, \ldots, t-1\}$; i.e., $\bF_* = \{v_B^* \ | \ B \in \hF_i, \ 0 \leq i \leq t-1 \}$, $\bF_+ = \{v_B^+ \ | \ B \in \hF_i, \ 0 \leq i \leq t-1 \}$ and $\bF_- = \{v_B^- \ | \ B \in \hF_i, \ 0 \leq i \leq t-1 \}$.
For each  $B \in \hF_i$ and $i \in \{0,\ldots,t-1\}$, if $x_\ell \in B$ then
connect vertex $v_B^* \in \bF_*$ to vertices  $x_\ell^+$ and $x_\ell^-$, 
connect vertex $v_B^+ \in \bF_+$ to vertex $x_\ell^+$, and finally
connect vertex $v_B^- \in \bF_-$ to vertex $x_\ell^-$.
\\
$\bullet$ \  Create an {\em instances distinguishing gadget} $\Lambda$. It consists of $2(s+1)$ vertices, $\a_0, \ldots, \a_{s}$ and $p_{\a_0}, \ldots, p_{\a_{s}}$, where each vertex $\a_j$ is connected to $p_{\a_j}$ for $j \in \{0, \ldots, s\}$. To implement the distinction of an instance, the vertices in $\Lambda$ are connected as follows to the vertices in the sets $\bF_*, \bF_-, \bF_+$:
For each $B \in \hF_i$, connect vertices $v_B^* \in \bF_*$, $v_B^- \in \bF_-$ and $v_B^+ \in \bF_+$ to each vertex $\a_j$ such that $b_j=1$ where $i=0b_{s-1}\ldots b_1b_0$ is the binary representation of $i$. We then make $\a_s$ adjacent to all the elements in $\bF_* \cup\bF_+\cup \bF_-$.
Notice that for $i \neq i'$, the vertices  $v_B^+$ (resp., $v_B^-, v_B^*$) and $v_{B'}^+$ (resp., $v_{B'}^-, v_{B'}^*$) for $B \in \hF_i$ and $B' \in \hF_{i'}$ have at least a different $\a$ neighbor in $\Lambda$. 
\\
$\bullet$ \ 
Let $\hF_i= \{B_{i,0}, \ldots, B_{i,m_i} \}$ and
let $m=\max_{i \, : \, 0\leq i\leq t-1}m_i$. We need $r=\lfloor \log_2 m \rfloor +1$ bits to have the binary representation of any $q$ such that $B_{i,q} \in \hF_i$.
 Create a {\em sets distinguishing gadget} $\Omega$.
It consists of $2r$ vertices, $\b_0, \ldots, \b_{r-1}$ and $p_{\b_0}, \ldots, p_{\b_{r-1}}$, where each vertex $\b_k$ is connected to $p_{\b_k}$ for $k \in \{0, \ldots, r-1\}$.  
To implement the distinction among the sets of an instance, the vertices in $\Omega$ are connected as follows to the vertices in the sets $\bF_*, \bF_-, \bF_+$:
connect vertex $\b_k$ to $v_{B_{i,q}}^* \in \bF_*$,  $v_{B_{i,q}}^+ \in \bF_+$ and  $v_{B_{i,q}}^- \in \bF_-$ whenever the bit $k$ in the binary representation of $q$ is 1.
Notice that fixed $\hF_i$, for $q \neq q'$, the vertices  $v_{B_{i,q}}^+$ (resp., $v_{B_{i,q}}^-, v_{B_{i,q}}^*$) and $v_{B_{i,q'}}^+$ (resp., $v_{B_{i,q'}}^-, v_{B_{i,q'}}^*$)  have at least a different neighbor in $\Omega$. 
\\
$\bullet$ \  Create the gadgets $\Delta_j$ for each $j \in \{0, \ldots, s\}$ as shown in Fig. \ref{fig:gadgets}. 
Let $b \in \{0,1\}$  and let $d_j^{b} \in B$ for some $B \in \hF_i$ and $i \in \{0,\ldots,t-1\}$, then connect $\hd_j^{b}$ and $\bd_j^{b}$,  to vertex $v_B^* \in \bF_*$, connect $\hd_j^{b}$  to $v_B^+ \in \bF_+$, and  connect $\bd_j^{b}$ to $v_B^- \in \bF_-$.
This set of gadgets will be used to select the instance.
\\
$\bullet$ \  Create the gadgets $\Theta_\ell$ for each $\ell \in \{1, \ldots, n\}$ as shown in Fig. \ref{fig:gadgets}.
Connect the vertices $p_\ell,l_\ell$ and $m_\ell$ to $x_\ell^-  \text{ and } x_\ell^+$, and connect the vertex $h_\ell$ to $x_\ell^+$ and the vertex $e_\ell$  to $x_\ell^-$. This gadget will be used to ensure that either $x_\ell^-$ or $x_\ell^+$ is selected. 
\\
$\bullet$ \  Create the gadget $\Gamma$ shown in Fig. \ref{fig:gadgets}. Connect $a$ to each
$\hd^0_j$ and $\hd^1_j$ for $j \in \{0, \ldots, s\}$.
Connect $b$ to each
$\bd^0_j$ and $\bd^1_j$ for $j \in \{0, \ldots, s\}$ and to each $\b_k$ for $k \in \{0, \ldots, r-1\}$.

\begin{example}
    Consider the set splitting instances $(U,\F_0),(U,\F_1)$ where $U = \{x_1,x_2\}$ and $\F_0=\{\{x_1\}\}$ and $\F_1=\{\{x_1,x_2\}\}$. The  families $\hF_0, \hF_1$ are as follow:

\centerline{ 
\begin{tabular}{lll}
$\hF_0=\{B_{0,0}, B_{0,1}\}$  & where
$B_{0,0} =\{x_1, d_1^0, d_0^0 \}$ &  
$B_{0,1} =\{x_1, d_1^1, d_0^1 \}$ 
 \\  
 & & \\
$\hF_1=\{B_{1,0}, B_{1,1}\}$  & where
$B_{1,0} =\{x_1, x_2, d_1^0, d_0^1 \}$ &  
$B_{1,1} =\{x_1, x_2, d_1^1, d_0^0 \}$ 
\\  
 \end{tabular} 
 }
Figure \ref{fig:G} shows the construction of graph $G$ used in the reduction. 
\end{example}
\begin{figure}[t] 
\centering
\includegraphics[width=0.355\linewidth]{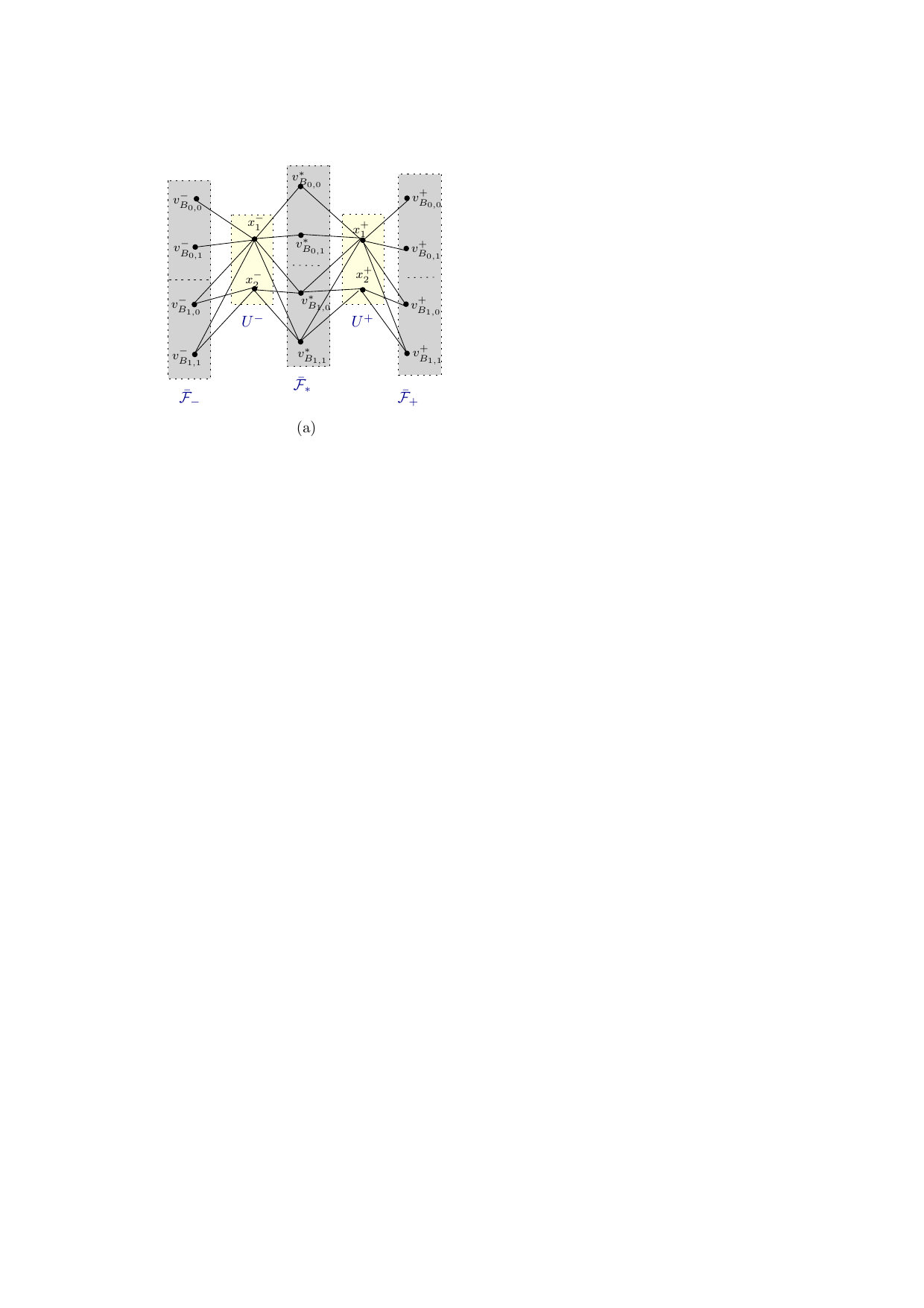} 
\\
\includegraphics[width=0.4\linewidth]{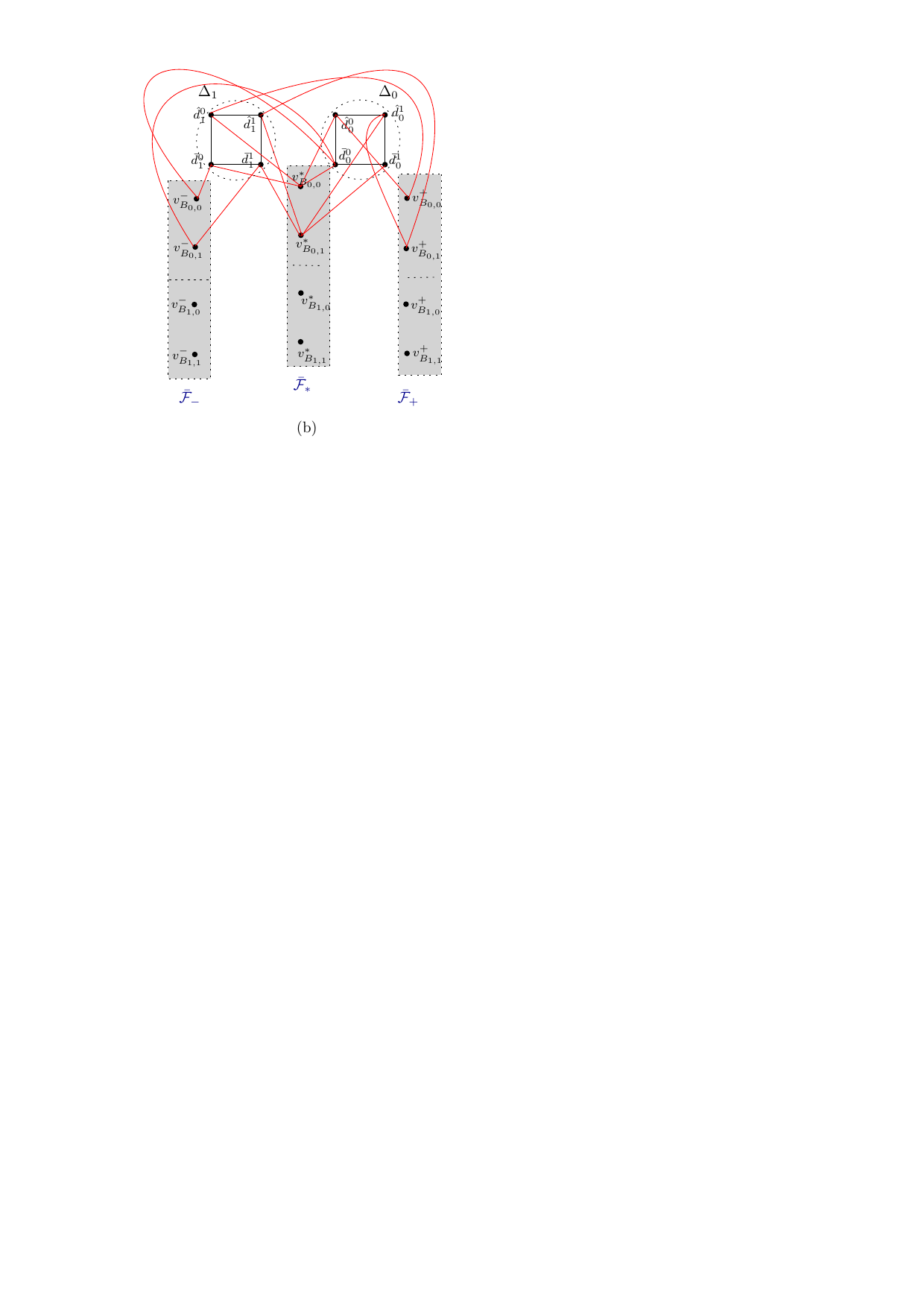} \quad \includegraphics[width=0.45\linewidth]{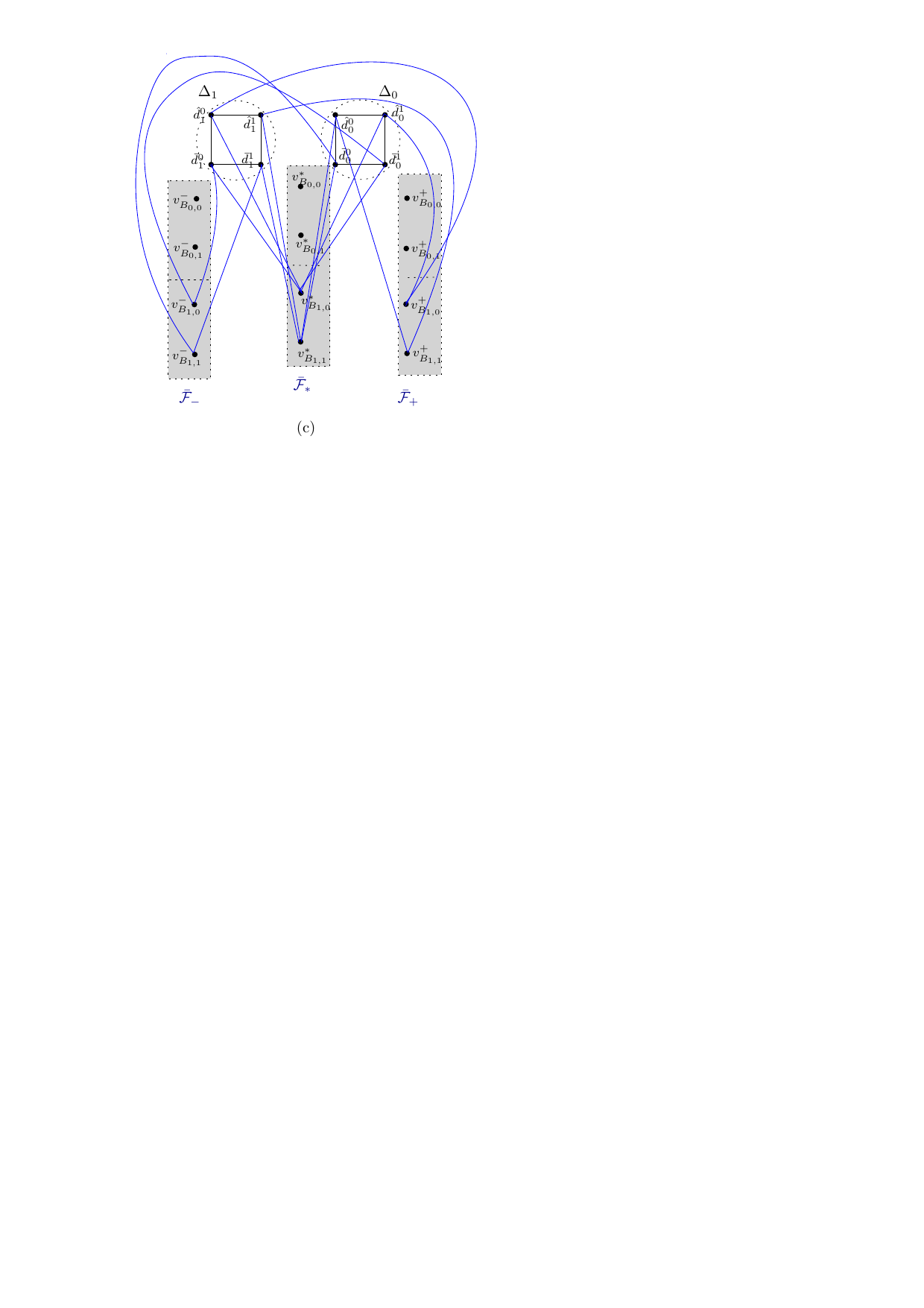}
    \caption{The construction of graph $G$ step by step. To avoid overloading, some edges within the gadgets have been removed. (a) The connection between the vertices in $\bF_+,\bF_-,\bF_*$ and in $\U^-,\U^+$. (b) The connection between the vertices in $\bF_+,\bF_-,\bF_*$ and in $\Delta_0,\Delta_1$}
    \label{fig:G}
\end{figure}

To complete the proof of Theorem \ref{th:no-k}, we prove the correctness of the construction.
\begin{lemma}
    There is an identifying code for $G$ of size at most $6n+\lfloor \log m\rfloor+11 \lceil \log t\rceil + 14=6n+r+11s + 14$ iff there exists at least one $i$ with $i \in \{0, \ldots, t-1\}$  such that $(\F_i,U)$ is a yes instance for Set Splitting.
\end{lemma}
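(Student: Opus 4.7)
The plan is to prove both directions of the biconditional; combined with Theorem~\ref{th:BJK} and the observation that the budget $6n+r+11s+14$ is polynomial in the size of the largest input instance and logarithmic in $t=2^s$, this yields Theorem~\ref{th:no-k}. The guiding intuition is that every gadget $\Theta_\ell$, $\Delta_j$, $\Lambda$, $\Omega$, and $\Gamma$ forces a mandatory block of vertices into any identifying code $I$, leaving just enough freedom to encode (i) an index $i\in\{0,\ldots,t-1\}$ through a choice of bit $b_j$ inside each $\Delta_j$, and (ii) a bipartition $(\U_0,\U_1)$ of $\U$ through the choice of one of $\{x_\ell^+, x_\ell^-\}$ inside each $\Theta_\ell$. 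The duplication of each $A \in \F_i$ into the two sets in~(\ref{Aadded}) is engineered so that failure of $(\U_0,\U_1)$ to split $A$ creates twin $I$-neighborhoods between $v_B^*$ and one of $\{v_B^+, v_B^-\}$ for the complement duplicate $B$.

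\textbf{Forward direction.} Given a splitting $(\U_0,\U_1)$ of some yes-instance $(\F_i,\U)$ with $i = 0b_{s-1}\cdots b_0$, I build $I$ by including: $x_\ell^+$ if $x_\ell \in \U_1$ and $x_\ell^-$ otherwise; from each $\Delta_j$, the branch vertices $\{\hd_j^{b_j}, \bd_j^{b_j}\}$ together with its internal fixed block; the internal fixed block of each $\Theta_\ell$; the vertices $a,b$ of $\Gamma$ together with any remaining mandatory vertices; and the pendants $p_{\a_j}, p_{\b_k}$. After verifying the total matches $6n+r+11s+14$, I check that $N[v]\cap I$ is non-empty and unique for every $v$: internal separations are handled by the gadget designs; $v_B^\circ$ vertices from different instances are separated because their $\hd/\bd$-neighbors in $I$ encode different bit patterns; vertices inside the same $\hF_i$ are separated by their $\b_k$-neighbors; and the three copies $v_B^*, v_B^+, v_B^-$ of a single $B$ are pairwise separated by their asymmetric $\U^\pm$ and $\hd/\bd$ neighborhoods. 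The crucial case is the complement duplicate $B = A \cup \{d_j^{1-b_j}: 0\leq j\leq s\}$ of some $A \in \F_i$: here $v_B^*$ vs.\ $v_B^+$ (resp.\ $v_B^-$) is separated exactly because the splitting of $A$ yields an $x_\ell^- \in I$ (resp.\ $x_\ell^+$) that is a neighbor of $v_B^*$ but not of $v_B^+$.

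\textbf{Backward direction and main obstacle.} Given $I$ with $|I| \leq 6n+r+11s+14$, the core work is establishing tight per-gadget lower bounds via twin-pair forcings: each pendant pair $(\a_j, p_{\a_j})$ and $(\b_k, p_{\b_k})$ contributes at least one vertex; the fixed vertices of $\Gamma$ are forced; each $\Theta_\ell$ contributes a fixed block plus exactly one of $\{x_\ell^+, x_\ell^-\}$; and each $\Delta_j$ contributes a fixed block plus exactly one of the two branches $\{\hd_j^b, \bd_j^b\}_{b\in\{0,1\}}$. Summing these yields exactly $6n+r+11s+14$, so every inequality is tight; the $\Theta_\ell$-choices define a bipartition $(\U_0,\U_1)$ and the $\Delta_j$-choices define an index $i$ with binary expansion $0b_{s-1}\cdots b_0$. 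To show $(\U_0,\U_1)$ splits every $A\in\F_i$, I argue by contradiction: if some $A \subseteq \U_0$ is unsplit, consider the complement duplicate $B = A \cup \{d_j^{1-b_j}: 0\leq j\leq s\} \in \hF_i$ and compare $v_B^*$ with $v_B^-$. Their $\Lambda$- and $\Omega$-contributions to $I$ coincide (same $B$); their $\U^\pm \cap I$ neighborhoods both equal $\{x_\ell^- : x_\ell \in A\}$ because $A \subseteq \U_0$; and neither has any $\hd/\bd$-neighbor in $I$, because the $\Delta_j$-selection matches the bits of $i$ while $B$ contains only the complementary $d_j^{1-b_j}$. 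The resulting equality $N[v_B^*] \cap I = N[v_B^-] \cap I$ contradicts the identifying property. The case $A \subseteq \U_1$ is symmetric, using $v_B^*$ vs.\ $v_B^+$. The main obstacle is packaging the gadget-level lower bounds tightly enough to saturate the budget with no slack, so that $i$ and $(\U_0,\U_1)$ are unambiguously readable from $I$.
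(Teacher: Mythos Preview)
Your overall strategy matches the paper's: force a mandatory block in each gadget so the budget is saturated, read an instance index from the $\Delta_j$ choices and a bipartition from the $\U^\pm$ choices, then use a twin pair $v_B^*$ vs.\ $v_B^\pm$ on a duplicate of an unsplit $A$ to derive a contradiction. The backward contradiction argument you sketch is essentially the paper's. However, two concrete choices in your forward direction would make the set $I$ you build \emph{not} an identifying code, and a related mischaracterisation propagates into your backward direction.

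First, you include the pendants $p_{\a_j},p_{\b_k}$ rather than the hub vertices $\a_j,\b_k$ themselves. But then your own separation claim ``vertices inside the same $\hF_i$ are separated by their $\b_k$-neighbors'' fails: the $\b_k$ are not in $I$, so two sets $B_{i,q},B_{i,q'}\in\hF_i$ of the same duplicate type can have $N[v_{B_{i,q}}^+]\cap I=N[v_{B_{i,q'}}^+]\cap I$ (their $\U^+$-traces may coincide once intersected with $I$). The paper puts all of $\a_0,\ldots,\a_s,\b_0,\ldots,\b_{r-1}$ into the code precisely so that $\Lambda$ separates instances and $\Omega$ separates set-indices; your $\hd/\bd$-pattern argument does not substitute for $\Omega$ inside a fixed $\hF_i$.

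Second, you select the \emph{same-bit} pair $\{\hd_j^{b_j},\bd_j^{b_j}\}$ from each $\Delta_j$, and in the backward direction you assert the forced branch is $\{\hd_j^b,\bd_j^b\}$. The gadget is built the other way: with exactly ten vertices of $\Delta_j$ in the code, the only possibilities are $\{\hd_j^0,\bd_j^1\}$ or $\{\hd_j^1,\bd_j^0\}$ (opposite superscripts). Consequently your forward $I$ would leave some pair inside $\Delta_j$ undistinguished (or exceed the budget), and your backward reading of the index from ``same-bit branches'' does not reflect what the gadget actually forces. With the paper's opposite-bit choice, the matching duplicate $B=A\cup\{d_j^{b_j}\}$ gives $v_B^+$ and $v_B^*$ the same $\Delta$-trace, so the splitting is needed there too (via $\U^-$), not only on the complement duplicate as you state. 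These are detail-level errors rather than a flaw in the plan, and replacing your choices by $\a_j,\b_k\in I$ and $\{\hd_j^{b_j},\bd_j^{1-b_j}\}\subset I$ recovers the paper's argument.
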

\begin{proof}
    $\Leftarrow$ 
    Let $(\F_i,U)$ be a yes instance for Set Splitting, where  $i=0b_{s-1}\ldots b_{0}$ is the binary representation of $i$, and let $S\subset U$ be the set splitting of  $\F_i$. In such a case, we pick the vertices   $\{\hd_s^{0},\hd_{s-1}^{b_{s-1}},\ldots,\hd_{0}^{b_{0}}\}\cup \{\bd_{s}^{1},\bd_{s-1}^{1-b_{s-1}},\ldots,\bd_{0}^{1-b_{0}}\}$ ( i.e., $2(s+1)$ vertices) and the vertices  $\{x_j^1, x_j^2, x_j^3, x_j^4 \}\cup\{y_j^1, y_j^2, y_j^3, y_j^4\}$ from all the gadgets $\Delta_j$, with $j\in\{0,1,\cdots,s\}$ (i.e.,  $8(s+1)$ vertices). We then pick the vertices $\{x_\ell^+ \in \U^+\ | \ x_\ell \in S\}$  and the vertices $\{x_\ell^- \in \U^-\ | \ x_\ell \in \U \setminus S\}$  (i.e,  $n$ vertices in total). We also pick all the vertices $\{\omega_0,\ldots,\omega_{r-1}\}$ (i.e., $r=\lfloor \log m\rfloor+1$ vertices) and $\{\lambda_0,\ldots,\lambda_s\}$ (i.e., $s+1$ vertices) along with the vertices $a $,$b$ and $c$ (i.e., 3 vertices) from the $\Gamma$ gadget. Finally, we  pick the vertices $p_\ell$,$l_\ell$,$m_\ell,g_\ell,f_\ell$ from each gadget $\Theta_\ell$ for $\ell \in \{1,\ldots, n\}$ (i.e.,  $5n$ vertices). The set of all the picked vertices,  say $Y$, has the cardinality exactly $6n+r+11s+14$. All that remains to claim is the collected vertices form identifying code instance for the constructed graph. First consider the copies of the vertices in $\mathcal{U}$. Copies of the same vertex , in particular $x_\ell^+$ and $x_\ell^-$ have different neighbor in the solution as exactly one is picked and the other is not. Further the vertices in $\mathcal{U}^+\cup \mathcal{U}^- $ are distinguished from all the other vertices in the graph as each vertex $x_\ell^+$ (and $x_\ell^-$) have $p_\ell,l_\ell,m_\ell$ in their neighborhood and no other vertex in the graph have them all at once. Now moving to the vertices in the $\bF_*,\bF_+,\bF_-$, they are all distinguished among themselves using the $\lambda$ vertices (which separates the vertices corresponding to the sets in $\hF_i$ from the vertices corresponding to the sets in $\hF_{i'}$ whenever $i\neq i'$ ) and the $\omega$ vertices (which separates the vertices corresponding to the sets within each $\hF_i$). Further, all of the vertices in $\bF_*\cup \bF_+\cup \bF_-$ their neighbor $\lambda_s$ as well as some combination of vertices from $\mathcal{U}^-\cup \mathcal{U}^+ $ in their solution which  is not true for any other vertex outside the set $\bF_*\cup \bF_+\cup \bF_-$, Finally the copies of the same vertex across $\bF_*, \bF_+$ and $ \bF_-$ are distinguished among themselves due to their interaction with $\mathcal{U}_-$, $\mathcal{U}_+$ and the gadgets $\Delta_j$, in particular $v_B^+$ and $v_B^-$ (when $B\in \hF_i$) are have in their neighborhoods in $\mathcal{U}^+$ and $ \mathcal{U}^- $ the `$+$' copies of the vertices that are contained in $B\cap S$ and the `$-$' copies of the vertices that are contained in $B\cap (\mathcal{U}\setminus S)$ both of which are non empty since $S$ was a splitting set. Here the vertices picked from $\Delta_j$'s differentiate between the vertices $v_{B'}^+,v_{B'}^-$ and $v_{B'}^*$ whenever $B'\notin \hF_i$. For the other gadgets and vertices used, it is straightforward to check that the vertices are distinguished from all the other vertices.

     $\Rightarrow$ 
     Let $S$ be an identifying code of the graph $G$ and $|S|=6n+r+11s + 14$. First note that within each gadget  $\Theta_\ell$, for  $\ell\in \{1,\ldots,n\}$ there must be at least $5$ vertices in $V(\Theta_\ell) \cap S$ to 
     have different closed neighborhood intersection with $S$
     for each pair of  vertices in $V(\Theta_\ell)\setminus \{l_\ell,p_\ell,m_\ell,h_\ell,e_\ell\}$; this give a guarantee that $5n$ vertices from $\bigcup_{i=1}^n V(\Theta_\ell)$ are in $S$. Similarly, at least 3 vertices must be in $V{(D)} \cap S$  to 
     have  different closed neighborhood intersection with $S$
for each pair of
vertices $a', b', c$ in the triangle. Since the vertices  $\{{p}_{\lambda_0}, \ldots,{p}_{\lambda_{s-1}}\} \cup \{{p}_{\omega_0}, \ldots,{p}_{\omega_{r-1}}\}$ have degree 1, then there must be at least $r$ vertices in   
     $(\{{p}_{\omega_0}, \ldots,{p}_{\omega_{r-1}}\}\cup \{\omega_0, \ldots, \omega_{r-1}\}) \cap S$
     and at least $s+1$ vertices in 
     $(\{{p}_{\lambda_0}, \ldots,{p}_{\lambda_{s-1}}\}\cup \{\lambda_0, \ldots, \lambda_{s-1}\}) \cap S$.
Furthermore, since the closed neighborhood of the vertices $h_\ell$ and $e_\ell$ are the same in $\Theta_\ell$, at least one of the vertices in $\{x_{\ell}^-,x_\ell^+\}$ should be in $S$ for all $\ell\in\{1,\ldots,n\}$. This means that at least $n$ vertices from the set $\U^- \cup \U^+$ are in $S$. 
At this point, exactly $10(s+1)$ vertices remain to identify in $S$.
Now we note (as made for $\Theta_\ell$) that if we consider the subgraph induced by the vertices
$\{w_j^i,x_j^i,y_j^i,z_j^i\}$ (where $i\in \{1,2,3,4\}$) in $\Delta_j$, then at least two of these vertices must be in $S$, and then also at least two vertices among $\hd_j^0,\hd_j^1,\bd^0_j,\bd^1_j$ must be in $S$. Hence, 
 at least 10 vertices from each $\Delta_j,\; j\in\{0,1,\cdots,s\}$ should belong to the solution $S$.
In case only 10 vertices from a $\Delta_j$ are in $S$, then necessarily either \textbf{a)} $\{\hd_j^0,\bar{d}^1_j\}\subseteq S$ and $\{\hd_j^1,\bar{d}^0_j\}\cap S=\phi$ or \textbf{b)} $\{\hd_j^1,\bar{d}^0_j\}\subseteq S$ and $\{\hd_j^0,\bar{d}^1_j\}\cap S=\phi$. 
Hence, no other vertex can be in $S$.
    
It remains to show that at least one of the instances of set splitting is a yes instance.
To this aim, we consider (by the above) that either $\{\hd_j^0,\bd^1_j\} \subseteq S$ or $\{\hd_j^1,\bar{d}^0_j\} \subseteq S$  for each $j\in \{0,1,\cdots,s\}$ and we set
$$b_j=\begin{cases}
1 & \mbox{if $\hd^1_j\in S$}\\
0 & \mbox{otherwise.}
\end{cases}$$
We claim that the instance $(\F_i,U)$, where the binary representation of $i$ is
 $i=b_sb_{s-1}\ldots b_0$ if $b_s=0$  and is $i=(1-b_s)(1-b_{s-1})\ldots(1-b_0)$ if $b_s=1$,
 is a yes instance of set splitting with solution set $X=\{x_\ell\in  \U \ | \ x_\ell^+\in S\cap\U^+\}$. \label{here} 

W.l.o.g, assume $b_s=0$. To prove the claim it is sufficient to argue that there is no vertex $v^+_B\in \bF_+$ 
with $B \in \hF_i$ such that  $N[v_B^+]\cap \U^+\subseteq S$ or $(N[v_B^+]\cap \U^+)\cap S=\phi$, that is, the set $A \subset B$ with $A \in \F_i$, is not split by $X$ (recall (\ref{Aadded})).
Assume, on the contrary, that there is such a vertex $v_B^+\in \bF_+$ with $N[v_B^+]\cap \U^+\subseteq S$, then $N[v_B^+]\cap S=N[v_B^*]\cap S$, contradicting our assumption that $S$ is an identifying code for $G$.
Now assume that there is a vertex $v_B^+\in \bF_+$ with $(N[v_B^+]\cap U^+)\cap S=\phi$ then, by construction, $N[v_B^-]\cap U^-\subseteq S$, due to the gadgets $\Theta_\ell$. 
Consider the set  $B'=(B\cup \{d^{1-b_j}_j : 0 \leq j \leq s\})\setminus \{d^{b_j}_j : 0 \leq j \leq s\}\in \hF_i$. In this case $N[v_{B'}^-]\cap S=N[v_{B'}^*]\cap S$, again contradicting our assumption of $S$ being an identifying code for $G$. Therefore, we have that $X$ is a set splitter for the instance $(\F_i,\U)$.
\end{proof}

\begin{theorem}\label{th:no-k-vc}
The \textsc{Identifying Code} problem does not admit a polynomial kernel when parameterized by vertex cover number and solution size unless $NP \subseteq coNP/poly$.
\end{theorem}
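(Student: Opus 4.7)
The plan is to reuse, without modification, the graph $G$ and the OR-cross-composition built for Theorem \ref{th:no-k}, and simply verify that the constructed graph also admits a small vertex cover. Since each Set Splitting instance $(\F_i,\U)$ has size at least $n$, it suffices to exhibit a vertex cover of $G$ whose size is polynomial in $n+\log m$ and polylogarithmic in the number $t$ of composed instances. Together with the solution-size bound already verified in Theorem \ref{th:no-k}, this will force the combined parameter $\mathrm{vc}(G)+k$ to respect the OR-cross-composition bound demanded by Theorem \ref{th:BJK}.

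I would build an explicit cover $C$ in four layers. First, place $\U^+\cup\U^-$ in $C$: this takes care of every edge from $\bF_*,\bF_+,\bF_-$ into the universe copies, and of the edges incident to each $\Theta_\ell$-gadget that touch $x_\ell^-$ or $x_\ell^+$. Second, add $\{\a_0,\ldots,\a_s\}$ and $\{\b_0,\ldots,\b_{r-1}\}$: this covers every edge from the distinguishing gadgets $\Lambda$ and $\Omega$ into $\bF_*\cup\bF_+\cup\bF_-$, together with the pendant edges $\a_j\,p_{\a_j}$ and $\b_k\,p_{\b_k}$. Third, include every vertex $\hd_j^b$ and $\bd_j^b$ for $j\in\{0,\ldots,s\}$ and $b\in\{0,1\}$: this absorbs all remaining edges between the $\Delta_j$-gadgets and $\bF_*\cup\bF_+\cup\bF_-$, as well as the edges from $a$ and $b$ in $\Gamma$ to these gadget vertices. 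Finally, add every internal vertex of each constant-size gadget $\Theta_\ell$, $\Delta_j$, and $\Gamma$, at a constant additive cost per gadget.

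The structural point that makes $C$ small is that the three big independent sets $\bF_*$, $\bF_+$, $\bF_-$ — whose total size may be as large as $6\sum_i|\F_i|$ — can be left outside $C$ altogether, because every one of their external neighbors is already in $C$. A straightforward enumeration of edge types then confirms that $C$ is a valid vertex cover of size $|C|=2n+(s+1)+r+4(s+1)+O(n+s)=O(n+\log t+\log m)$. Combined with $k=6n+r+11s+14$, the combined parameter $\mathrm{vc}(G)+k$ is polynomial in $n+\log m$ and logarithmic in $t$, so the OR-cross-composition from Set Splitting carries over verbatim and Theorem \ref{th:BJK} delivers the conclusion. The main obstacle I expect is purely bookkeeping: exhaustively matching each edge class of $G$ against the four layers of $C$, so as to be sure that no edge between $\bF_*\cup\bF_+\cup\bF_-$ and the distinguishing or $\Delta$-gadgets is silently uncovered — such an omission would force adding vertices of $\bF_*\cup\bF_+\cup\bF_-$ into $C$, which, summed over all sets, would break the polynomial bound.
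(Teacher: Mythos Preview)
Your proposal is correct and follows essentially the same route as the paper: both arguments observe that $\bF_*\cup\bF_+\cup\bF_-$ is an independent set whose complement in $V(G)$ has size $O(n+\log m+\log t)$, and hence that complement is a vertex cover of the required size. The paper states this in one line (independent set $\Rightarrow$ complement is a vertex cover, then count), whereas you unpack the same fact by explicitly listing the layers of the cover and checking each edge class; the content is identical.
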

\begin{proof}
Along with the theorem proved just now, considering that the subgraph of $G$ induced by the vertices in $\bF_+ \cup \bF_* \cup \bF_-$ is an independent set and  that $|V(G) \setminus (\bF_+ \cup \bF_* \cup \bF_-)|=O(|U|+\log(m) + \log t)$  where  $m=\max_{i: 1\leq i\leq t}|\F_i|$,  we have that $G$ has a vertex cover of size at
most $O(|U|+\log(m) + \log t)$ thus implying the theorem which was left open in the paper \cite{CAPPELLE202168}.
\end{proof}

\bibliographystyle{plain}

\bibliography{ref}
\end{document}